\documentclass{article}
\usepackage[T1]{fontenc}
\usepackage[dvipdfmx]{color}
\usepackage{amsmath}
\usepackage{amsthm}
\usepackage{amssymb}
\usepackage{geometry}

\usepackage[pdfusetitle,
 bookmarks=true,bookmarksnumbered=false,bookmarksopen=false,
 breaklinks=false,pdfborder={0 0 1},backref=false,colorlinks=false]
 {hyperref}

\makeatletter
\newcommand{\llladdress}[1]{
	\par {\raggedright #1
	\vspace{1.4em}
	\noindent\par}
}
\newenvironment{llllist}[1]
	{\begin{list}{}
		{\settowidth{\labelwidth}{#1}
		 \setlength{\leftmargin}{\labelwidth}
		 \addtolength{\leftmargin}{\labelsep}
		 }}
	{\end{list}}

\usepackage{stmaryrd}
\usepackage{cases}
\newcommand{\argmax}{\mathop{\rm arg~max}\limits}

\makeatother

\theoremstyle{plain}
\newtheorem{thm}{\protect\theoremname}
\newtheorem{lem}[thm]{\protect\lemmaname}
\newtheorem{cor}[thm]{\protect\corollaryname}

\providecommand{\corollaryname}{Corollary}
\providecommand{\lemmaname}{Lemma}
\providecommand{\theoremname}{Theorem}

\begin{document}
\title{Sufficient support size of measurements for quantum estimation}
\author{Koichi Yamagata}
\maketitle

\llladdress{Institute of Science and Engineering, Kanazawa University\\
Kanazawa, Ishikawa, 920-1192, Japan}

\global\long\def\E{\mathcal{E}}%
\global\long\def\S{\mathcal{S}}%
\global\long\def\R{\mathbb{R}}%
\global\long\def\C{\mathbb{C}}%
\global\long\def\N{\mathbb{N}}%
\global\long\def\Z{\mathbb{Z}}%
\global\long\def\D{\mathcal{D}}%
\global\long\def\M{\mathcal{M}}%
\global\long\def\X{\mathcal{X}}%
\global\long\def\F{\mathcal{F}}%
\global\long\def\B{\mathcal{B}}%
\global\long\def\T{\mathcal{T}}%
\global\long\def\P{\mathcal{P}}%
\global\long\def\H{\mathcal{H}}%
\global\long\def\Y{\mathcal{Y}}%
\global\long\def\A{\mathcal{A}}%
\global\long\def\bra#1{\left\langle #1\right|}%
\global\long\def\ket#1{\left|#1\right\rangle }%
\global\long\def\tr{{\rm tr}\,}%
\global\long\def\Tr{{\rm Tr}\,}%
\global\long\def\braket#1#2{\left\langle #1\mid#2\right\rangle }%
\global\long\def\V{\mathcal{V}}%
\global\long\def\re{{\rm Re}\,}%
\global\long\def\im{{\rm Im}\,}%
\global\long\def\id{{\rm id}}%
\global\long\def\ii{\mathrm{i}}%

\begin{abstract}
In quantum estimation for a $d$-parameter family of density operators
on a finite-dimensional Hilbert space $\H$, an estimator is specified
by a pair $\left(M,\hat{\theta}\right)$, where $M$ is a POVM with
a finite outcome set $\Omega$ and $\hat{\theta}:\Omega\to\R^{d}$
is a classical estimator map. Since the number of outcomes $\left|\Omega\right|$
is a priori unbounded, the space of admissible POVMs is vast, which
makes the search for optimal estimators difficult. In this paper,
for the minimization of the weighted trace of the mean squared error
among locally unbiased estimators, we prove that it suffices to consider
POVMs with at most $\left(\dim\H\right)^{2}+d(d+1)/2-1$ outcomes,
and that the optimization can be restricted to rank-one measurements.
 For Bayesian estimation with a general loss function, we show that
rank-one POVMs with at most $(\dim\mathcal{H})^{2}$ outcomes are
sufficient for the infimum value of the Bayes risk. Furthermore, when
the model admits a real sufficient subalgebra, we show that the $\left(\dim\H\right)^{2}$
term in the above support-size bounds can be reduced in both the locally
unbiased and Bayesian settings. These bounds substantially reduce
the search space for optimal measurements and justify restricting
numerical optimization to rank-one POVMs with finitely many outcomes.
\end{abstract}

\section{Introduction}

Quantum estimation theory concerns the design of measurements and
estimators for inferring parameters of quantum states, with applications
ranging from quantum sensing and tomography to benchmarking and calibration
of quantum devices. In finite dimensions, an estimator is specified
by a POVM together with a classical post-processing map. A basic obstacle
in optimizing measurement performance is that the number of POVM outcomes
is not fixed a priori: even for a fixed Hilbert space dimension, POVMs
with arbitrarily many outcomes exist. Consequently, numerical searches
for optimal measurements often proceed by restricting to POVMs with
a prescribed number of outcomes or a prescribed rank structure. Recent
works have developed practical numerical approaches for optimizing
POVMs in quantum parameter estimation \cite{kimizu,qest}. Without
a principled support-size guarantee, however, such finite-dimensional
searches do not by themselves certify global optimality.

The goal of this paper is to provide explicit finite-outcome guarantees
for optimal measurements in two standard quantum-estimation settings:
local estimation under the locally unbiased condition and Bayesian
estimation with a prior. We derive sufficient bounds on the number
of outcomes needed to preserve the optimal value of the corresponding
loss functions and show that the optimization may be restricted to
rank-one/extremal POVMs. Moreover, by exploiting model-adapted sufficient
operator subspaces\textemdash including real sufficient subalgebras
when available\textemdash we obtain sharper, structure-dependent bounds
that substantially reduce the effective search space for both analysis
and computation.

Let $\H$ be a finite-dimensional Hilbert space, and let $\S(\H)$
and $\B(\H)$ denote the set of density operators and linear operators
on $\H$, respectively. For $s\in\N$, let $\Omega_{s}:=\{1,2,\dots,s\}$,
and let $\M\left(\H,s\right)$ denote the set of POVMs $M=(M_{x})_{x\in\Omega_{s}}$
on $\H$ with outcome set $\Omega_{s}$, i.e., $M_{x}\geq0$ for all
$x\in\Omega_{s}$ and $\sum_{x}M_{x}=I$. Consider a $d$-parameter
family of density operators $\left\{ \rho_{\theta}\in\S(\H)\mid\theta\in\Theta\subset\R^{d}\right\} $.
An estimator is specified by a pair $\left(M,\hat{\theta}\right)$,
where $M\in\M(\H,s)$ for some $s\in\N$ and $\hat{\theta}:\Omega_{s}\to\R^{d}$
is a classical estimator map. We call $\left(M,\hat{\theta}\right)$
unbiased if, for all $\theta\in\Theta$,
\begin{equation}
E_{\theta}\left[M,\hat{\theta}^{i}\right]=\sum_{x\in\Omega_{s}}\hat{\theta}^{i}(x)\,\tr\rho_{\theta}M_{x}=\theta^{i},\qquad i=1,\dots,d.\label{eq:unbiased}
\end{equation}
We call an estimator $\left(M,\hat{\theta}\right)$ a locally unbiased
estimator (LUE) at a point $\theta_{0}\in\Theta$ \cite{holevo} if
the unbiasedness condition (\ref{eq:unbiased}) holds up to first
order around $\theta_{0}$, i.e.,
\begin{align}
\sum_{x\in\Omega_{s}}\hat{\theta}^{i}(x)\,\tr\rho_{\theta_{0}}M_{x} & =\theta^{i}_{0}\qquad(i=1,\dots,d),\label{eq:lue1}\\
\sum_{x\in\Omega_{s}}\hat{\theta}^{i}(x)\,\tr\partial_{j}\rho_{\theta_{0}}M_{x} & =\delta^{i}_{j}\qquad(i,j=1,\dots,d),\label{eq:lue2}
\end{align}
where $\partial_{j}\rho_{\theta_{0}}=\left.\frac{\partial}{\partial\theta^{j}}\rho_{\theta}\right|_{\theta=\theta_{0}}$.
For a locally unbiased estimator $\left(M,\hat{\theta}\right)$ at
$\theta_{0}$, define the mean squared error (MSE) matrix $V_{\theta_{0}}\left[M,\hat{\theta}\right]\in\R^{d\times d}$
by
\[
V_{\theta_{0}}\left[M,\hat{\theta}\right]^{ij}=\sum_{x}\left(\hat{\theta}^{i}(x)-\theta^{i}_{0}\right)\left(\hat{\theta}^{j}(x)-\theta^{j}_{0}\right)\tr\rho_{\theta_{0}}M_{x}.
\]
In quantum estimation theory, it is known $V_{\theta_{0}}\left[M,\hat{\theta}\right]$
satisfies the quantum Cram\'{e}r Rao inequality
\begin{equation}
V_{\theta_{0}}\left[M,\hat{\theta}\right]\geq J^{-1}_{\theta_{0}},\label{eq:SLDCRbound}
\end{equation}
where $J_{\theta_{0}}:=\left[\tr\partial_{i}\rho_{\theta_{0}}L_{j}\right]$
is the symmetric logarithmic derivative (SLD) Fisher information matrix.
The SLDs $L_{i}$ ($1\leq i\leq d$) are defined by
\[
\partial_{i}\rho_{\theta_{0}}=\frac{1}{2}\left(L_{i}\rho_{\theta_{0}}+\rho_{\theta_{0}}L_{i}\right).
\]
However, this matrix inequality cannot, in general, be saturated,
except when the SLDs commute, i.e., $\left[L_{i},L_{j}\right]=0$.
To avoid this difficulty, one often abandons minimizing $V_{\theta_{0}}\left[M,\hat{\theta}\right]$
itself and instead minimizes the weighted scalar criterion $\Tr WV_{\theta_{0}}\left[M,\hat{\theta}\right]$,
where $W$ is a given real positive definite $d\times d$ matrix (a
weight) \cite{holevo,helstrom}. In the local estimation setting,
we assume that the model is locally identifiable at $\theta_{0}$,
i.e., $\partial_{1}\rho_{\theta_{0}},\ldots,\partial_{d}\rho_{\theta_{0}}$
are linearly independent over $\mathbb{R}$. Under this condition,
the optimal value is finite; see Appendix \ref{sec:invertiblePOVM}.
Note that when $\rho_{\theta_{0}}>0$, the infimum is attained and
hence is a minimum. In contrast, when $\rho_{\theta_{0}}$ is not
faithful, the infimum may be finite without being attained; see Appendix
\ref{sec:nonattainment} for an explicit example.

Although the scalar optimal value
\begin{equation}
c_{\theta_{0}}:=\inf\left\{ \left.\Tr W\,V_{\theta_{0}}\left[M,\hat{\theta}\right]\right|\left(M,\hat{\theta}\right)\text{ is LUE},\,M\in\M(\H,s),\,s\in\N\right\} \label{eq:min}
\end{equation}
always exists, explicit formulas for $c_{\theta_{0}}$ and the corresponding
optimal locally unbiased estimators are known only for a limited class
of models, including pure-state models \cite{matsumoto_pure} and
two-level systems \cite{gill_massar,yama_tomo}. In principle, one
could therefore seek the optimal value in (\ref{eq:min}) by numerical
optimization. However, such an optimization cannot be carried out
directly: the number of outcomes $s$ of POVMs is not a priori bounded,
and the resulting search space is prohibitively large. The main objective
of this paper is to derive a sufficient upper bound on the support
size $s$ sufficient to preserve the optimal value  in (\ref{eq:min}).
Previously known bounds include $\frac{1}{2}d(d+1)\left(\dim\H\right)^{2}+1$,
obtained via conic programming \cite{conic}, and $\left(\dim\H\right)^{2}+d(d+1)$,
proved earlier by Fujiwara \cite{adaptive}. The present results are
obtained by extending Fujiwara\textquoteright s convex-analytic approach:
we apply the framework directly to the target information quantity
and, in addition, incorporate sufficient operator subspaces. This
yields the improved support-size bounds proved in this paper.

It is useful to clarify why the local multiparameter problem does
not reduce directly to finite-outcome arguments based on the convex
structure of POVMs, such as those developed in \cite{Chiribella}.
The classical Fisher information matrix is convex, in the Loewner
order, with respect to ordinary convex combinations of POVMs. In the
single-parameter case, minimizing the inverse Fisher information is
equivalent to maximizing the scalar Fisher information, and an optimal
measurement can therefore be chosen among extreme POVMs. In the multiparameter
case, however, the relevant objective is the weighted trace of the
inverse Fisher information matrix. This minimization problem cannot,
in general, be reduced to the maximization of a convex scalar function
of the POVM. Consequently, an optimal measurement need not be an extreme
POVM, and randomization among measurements may strictly decrease the
objective value. Indeed, for estimation problems on two-level systems,
the minimum can be attained by randomizing among projective measurements,
with the mixing probabilities depending on the weight matrix $W$
\cite{gill_massar,yama_tomo}. 

To state the main result, we introduce the infimum of the weighted
trace of the MSE under the restriction that the POVM belongs to $\M\left(\B,s\right)$:
\[
c_{\theta_{0}}\left(\B,s\right):=\inf\left\{ \left.\Tr WV_{\theta_{0}}\left[M,\hat{\theta}\right]\right|\left(M,\hat{\theta}\right)\text{ is LUE},\,M\in\M\left(\B,s\right)\right\} ,
\]
where $\M\left(\B,s\right)$ denotes the set of POVMs with outcome
set $\Omega_{s}$ such that each POVM element belongs to a subset
$\B\subset\B(\H)$.  Then $c_{\theta_{0}}=\inf_{s\in\N}c_{\theta_{0}}\left(\B(\H),s\right)$.
Next, we define a sufficient subspace $\A\subset\B(\H)$, as follows:
\begin{llllist}{00.00.0000}
\item [{(i)}] $\A$ is a real linear subspace satisfying $I\in\A$.
\item [{(ii)}] There exists a positive map $\Gamma:\B(\H)\to\A$ such that
$\Gamma(I)=I$ and
\begin{equation}
{\rm Re}\,\tr\rho_{\theta}\Gamma(B)={\rm Re}\,\tr\rho_{\theta}B\qquad\forall\theta\in\Theta,\forall B\in\B(\H).\label{eq:sufficient_space}
\end{equation}
\end{llllist}
By replacing condition (\ref{eq:sufficient_space}) with the following
conditions, we define a locally sufficient subspace $\A\subset\B(\H)$
:
\begin{align}
{\rm Re}\,\tr\rho_{\theta_{0}}\Gamma(B) & ={\rm Re}\,\tr\rho_{\theta_{0}}B\qquad\forall B\in\B(\H),\label{eq:lsufficient_space1}\\
{\rm Re}\,\tr\partial_{i}\rho_{\theta_{0}}\Gamma(B) & ={\rm Re}\,\tr\partial_{i}\rho_{\theta_{0}}B\qquad\forall B\in\B(\H),\text{1\ensuremath{\leq}\ensuremath{i}}\leq d.\label{eq:lsufficient_space2}
\end{align}
Here, by a positive map $\Gamma:\B(\H)\to\A$, we mean an $\R$-linear
map such that $\Gamma(B)\ge0$ for every $B\in\B(\H)$ with $B\ge0$.
This is weaker than requiring a completely positive map on $\B(\H)$,
but it is sufficient for our purposes. Note that this notion of a
sufficient subspace generalizes the notion of a sufficient subalgebra
defined via completely positive maps, introduced by Petz \cite{sufficient}.
For $\A\subset\B(\H)$, define
\begin{align*}
\A_{h} & :=\left\{ A\in\A\mid A=A^{*}\right\} ,\\
\A_{+} & :=\left\{ A\in\A\mid A\ge0\right\} ,\\
\A_{1} & :=\left\{ A\in\A_{+}\mid\tr A\leq1\right\} ,\\
\A_{e} & :=\left\{ A\in\A_{+}\mid\exists r>0\text{ such that }r\,A\text{ is an extreme point of }\A_{1}\right\} .
\end{align*}

We are now ready to state the main results. Suppose that there exists
a locally sufficient subspace $\A\subset\B(\H)$. Then we obtain a
sufficient support size for the local optimization problem and show
that
\begin{equation}
c_{\theta_{0}}=c_{\theta_{0}}\left(\A_{e},\dim\A_{h}+\frac{1}{2}d(d+1)-1\right).\label{eq:main_point}
\end{equation}
Note that when $\A=\B(\H)$, we have $\dim\A_{h}=(\dim\H)^{2}$. If
$\A$ is a real $*$-subalgebra, $\dim\A_{h}$ can be expressed more
explicitly; see Appendix \ref{sec:real_algebra} for details. More
generally, by specializing $\A$ to an appropriate real Jordan subalgebra
compatible with the model, one can often further reduce $\dim\A_{h}$
and obtain sharper support-size bounds. For details on real Jordan
algebras and their role in quantum sufficiency, see \cite{jordan-yamagata}.
This result has two important implications. First, the sufficient
support size can be smaller than previously known bounds. Second,
the optimization can be restricted to POVMs in $\A_{e}$. Since $\A_{e}$
can be parameterized with far fewer variables than $\A_{+}$ and $\B(\H)_{+}$,
this suggests the possibility of more efficient algorithms for finding
optimal POVMs.

In Bayesian estimation, the full-space finite-outcome reduction is
in line with general convex-analytic arguments for POVMs \cite{Chiribella}.
The present paper refines this reduction by incorporating sufficient
subspaces: when a sufficient subspace $\A$ is available, the optimization
can be restricted to $\A_{e}$, and the support size is bounded by
$\dim\A_{h}$.

Let $\left\{ \rho_{\theta}\in\S(\H)\mid\theta\in\Theta\subset\R^{d}\right\} $
be a parametric family of density operators on $\H$. Let $\pi$ be
a given prior on $\Theta$, and let $\ell_{\theta}:\R^{d}\to[0,\infty)$
be a loss function for each $\theta\in\Theta$. The average cost of
an estimator $\left(M,\hat{\theta}\right)$ is defined by
\[
c_{\pi}\left[M,\hat{\theta}\right]=\int_{\Theta}d\theta\,\pi(\theta)\,\sum_{x}\ell_{\theta}(\hat{\theta}(x))\,\tr\rho_{\theta}M_{x}.
\]
In this case, the map $\theta\mapsto\rho_{\theta}$ need not be smooth,
and no unbiasedness condition is imposed on $\left(M,\hat{\theta}\right)$.
To compute
\[
c_{\pi}:=\inf\left\{ \left.c_{\pi}\left[M,\hat{\theta}\right]\right|\left(M,\hat{\theta}\right)\text{ is an estimator},\,M\in\M(\H,s),\,s\in\N\right\} ,
\]
we seek a sufficient bound on the support size of POVM. For a subset
$\B\subset\B(\H)$, define the restricted optimal value by
\[
c_{\pi}(\B,s):=\inf\left\{ \left.c_{\pi}\left[M,\hat{\theta}\right]\right|\left(M,\hat{\theta}\right)\text{ is an estimator},\,M\in\M(\B,s)\right\} .
\]
Then $c_{\pi}=\inf_{s\in\N}c_{\pi}(\B(\H),s)$. If there exists a
sufficient subspace $\A\subset\B(\H)$, we can derive a sufficient
support-size bound for the value of the Bayesian optimization problem,
and we show that
\begin{equation}
c_{\pi}=c_{\pi}\left(\A_{e},\dim\A_{h}\right).\label{eq:main_prior}
\end{equation}
Since, for a general loss function, an optimal estimate for each measurement
outcome need not exist, we formulate the Bayesian result in terms
of the infimum of the Bayes risk.

This paper is organized as follows. In Section \ref{sec:point},
we study support-size bounds for POVMs in local estimation and prove
(\ref{eq:main_point}). In particular, we establish an affine-structure
argument for POVMs and classical Fisher information matrices, and
we prove uniqueness of the optimal classical Fisher information matrix
under weighted-trace minimization whenever the infimum is attained.
In Section \ref{sec:bayes}, we analyze the Bayesian setting and prove
(\ref{eq:main_prior}). In Section \ref{sec:Example}, we present
concrete examples, including the real two-level model and its two-copy
extension, to illustrate the resulting support-size bounds. Appendix
\ref{sec:real_algebra} describes the structure of finite-dimensional
real $*$-subalgebras and gives the corresponding dimension formula
for their Hermitian parts. Appendix \ref{sec:invertiblePOVM} gives
a local identifiability condition ensuring the existence of a POVM
with nonsingular classical Fisher information. Appendix \ref{sec:nonattainment}
gives an explicit example in which the infimum of the local optimization
problem is finite but is not attained.

\section{Sufficient support size of POVMs for LUE\label{sec:point}}

In this section, we study the support size of POVMs for locally unbiased
estimators and prove (\ref{eq:main_point}). Let $\H$ be a finite-dimensional
Hilbert space,  and let $\left\{ \rho_{\theta}\in\S(\H)\mid\theta\in\Theta\subset\R^{d}\right\} $
be a smooth parametric family of density operators on $\H$. For a
fixed POVM $M\in\M\left(\H,s\right)$, any locally unbiased estimator
$\left(M,\hat{\theta}\right)$ at $\theta_{0}\in\Theta$ \textcolor{red}{satisfies}
the classical Cram\'{e}r Rao inequality
\begin{equation}
V_{\theta_{0}}\left[M,\hat{\theta}\right]\geq F_{\theta_{0}}[M]^{-1},\label{eq:cramer_cl}
\end{equation}
where $F_{\theta_{0}}[M]$ is the classical Fisher information matrix,
\begin{equation}
F_{\theta_{0}}[M]=\sum_{x}g_{\theta_{0}}[M_{x}],\label{eq:cFisher}
\end{equation}
and
\begin{align*}
g_{\theta_{0}}[X]_{ij} & =\frac{\left(\tr\partial_{i}\rho_{\theta_{0}}X\right)\left(\tr\partial_{j}\rho_{\theta_{0}}X\right)}{\tr\rho_{\theta_{0}}X},\qquad X\in\B(\H)_{+}.
\end{align*}
Note that when $\tr\rho_{\theta_{0}}X=0$, then necessarily $\tr\partial_{i}\rho_{\theta_{0}}X=0$
for $i=1,\dots,d$. In this case, we define $g_{\theta_{0}}[X]=0$.
Whenever $F_{\theta_{0}}[M]$ is nonsingular, equality in (\ref{eq:cramer_cl})
is achievable by
\[
\hat{\theta}^{i}(x)=\theta^{i}_{0}+\sum^{d}_{j=1}\left(F_{\theta_{0}}[M]^{-1}\right)^{ij}\frac{\tr\partial_{j}\rho_{\theta_{0}}M_{x}}{\tr\rho_{\theta_{0}}M_{x}}\qquad(i=1,\dots,d),
\]
which satisfies the locally unbiased conditions (\ref{eq:lue1}) and
(\ref{eq:lue2}). Therefore, minimizing $\Tr W\,V_{\theta_{0}}\left[M,\hat{\theta}\right]$
reduces to minimizing $\Tr W\,F_{\theta_{0}}[M]^{-1}$. To analyze
this minimization, we first introduce the following lemmas.
\begin{lem}
\label{lem:g_property}For $X,Y\in\B_{+}(\H)$ and $t\in[0,\infty)$,
the following hold:
\end{lem}

\begin{description}
\item [{(i)}] $g_{\theta_{0}}[tX]=tg_{\theta_{0}}[X]$.
\item [{(ii)}] $g_{\theta_{0}}[X+Y]\leq g_{\theta_{0}}[X]+g_{\theta_{0}}[Y]$.
\end{description}
\begin{proof}
Assertion (i) is immediate from the definition. 

To prove (ii), it suffices to show that, for every $v\in\R^{d}$,
\[
v^{\mathsf{T}}\left\{ g_{\theta_{0}}[X]+g_{\theta_{0}}[Y]-g_{\theta_{0}}[X+Y]\right\} v\geq0.
\]
Set
\[
p[X]=\tr\rho_{\theta_{0}}X,\qquad D_{v}[X]=\sum^{d}_{i=1}v^{i}\tr\partial_{i}\rho_{\theta_{0}}X.
\]
If $p[X]=0$ or $p[Y]=0$, assertion (ii) follows immediately. We
may therefore assume that $p[X]>0$ and $p[Y]>0$. Then
\[
v^{\mathsf{T}}\left\{ g_{\theta_{0}}[X]+g_{\theta_{0}}[Y]-g_{\theta_{0}}[X+Y]\right\} v=\frac{\left(D_{v}[Y]p[X]-D_{v}[X]p[Y]\right)^{2}}{p[X]p[Y](p[X]+p[Y])}\geq0.
\]
This proves (ii).
\end{proof}

\begin{lem}
\label{lem:F_property}For a POVM $M=\left(M_{1},M_{2},\dots,M_{s}\right)\in\M(\H,s)$,
the following hold.
\end{lem}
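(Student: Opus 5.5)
The items of this lemma are not shown above. I expect them to be the standard structural facts about the classical Fisher information $F_{\theta_{0}}[M]=\sum_{x}g_{\theta_{0}}[M_{x}]$ that the support-size argument will need:
\begin{itemize}
\item monotonicity under coarse-graining, i.e.\ merging outcomes cannot increase $F_{\theta_{0}}$;
\item invariance when two outcomes are proportional, $M_{y}=tM_{x}$;
\item the behaviour of $F_{\theta_{0}}$ under replacing an element by a decomposition into pieces, e.g.\ into extreme elements of $\A_{1}$.
\end{itemize}
My plan is to derive each of these directly from Lemma \ref{lem:g_property}, i.e.\ positive homogeneity (i) and subadditivity (ii) of $g_{\theta_{0}}$, working element by element. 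I also note that each summand satisfies $g_{\theta_{0}}[X]\ge0$, since it is a rank-one Gram form $uu^{\mathsf{T}}/p$ with $p>0$, or zero.

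For coarse-graining, let $N$ be obtained from $M$ by merging a set of outcomes $S\subset\Omega_{s}$ into one outcome $\sum_{x\in S}M_{x}$. Iterating Lemma \ref{lem:g_property}(ii) gives
\[
g_{\theta_{0}}\Bigl[\sum_{x\in S}M_{x}\Bigr]\le\sum_{x\in S}g_{\theta_{0}}[M_{x}].
\]
Summing over the blocks of the partition then yields $F_{\theta_{0}}[N]\le F_{\theta_{0}}[M]$.

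Conversely, a refinement of $M$ increases $F_{\theta_{0}}$. A refinement splits some $M_{x}=\sum_{k}A_{k}$ with $A_{k}\ge0$, and the same inequality read in the other direction shows it can only increase $F_{\theta_{0}}$ in the Loewner order.

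For proportional elements, suppose $M_{y}=tM_{x}$ with $t\ge0$. By (i),
\[
g_{\theta_{0}}[M_{x}+M_{y}]=(1+t)\,g_{\theta_{0}}[M_{x}]=g_{\theta_{0}}[M_{x}]+g_{\theta_{0}}[M_{y}].
\]
So merging proportional outcomes leaves $F_{\theta_{0}}$ unchanged. This is what allows one to assume, without loss, that the elements of a POVM are pairwise non-proportional, or to rescale elements.

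If the lemma also asserts that $F_{\theta_{0}}$ is convex under convex combinations $\lambda M+(1-\lambda)M'$ of POVMs with the same outcome set, I would apply (ii) and (i) termwise:
\[
g_{\theta_{0}}[\lambda M_{x}+(1-\lambda)M'_{x}]\le\lambda g_{\theta_{0}}[M_{x}]+(1-\lambda)g_{\theta_{0}}[M'_{x}],
\]
and then sum over $x$.

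For a statement restricting to $\A_{e}$, I would do the following:
\begin{itemize}
\item Use the locally sufficient map $\Gamma$ of (\ref{eq:lsufficient_space1})--(\ref{eq:lsufficient_space2}) to replace $M$ by $(\Gamma(M_{x}))_{x}$. This is again a POVM, by positivity and $\Gamma(I)=I$, and by the defining identities it has the same $p[\cdot]$ and $D_{v}[\cdot]$, hence the same $F_{\theta_{0}}$. (Only the real parts enter, and these quantities are real for Hermitian $M_{x}$.)
\item Decompose each $\Gamma(M_{x})\in\A_{+}$ as a finite sum of elements of $\A_{e}$, via Carath\'eodory/Krein--Milman on the compact convex set $\A_{1}$.
\item Apply the refinement inequality above.
\end{itemize}

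The main obstacle is this last step. One must verify that $\A_{1}$ is compact, so that Krein--Milman and Carath\'eodory apply in the finite-dimensional real space $\A_{h}$. One must also check that $\Gamma$ maps Hermitian operators to Hermitian ones, so that the real-part identities indeed reproduce $\tr\partial_{i}\rho_{\theta_{0}}\Gamma(X)$ exactly. I would handle the Hermiticity by noting that a positive $\R$-linear map sends differences of positive operators, hence all of $\B(\H)_{h}$, into $\A_{h}$. Compactness holds because $\tr A\le1$ with $A\ge0$ bounds the norm.
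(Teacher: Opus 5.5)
Your coarse-graining argument (iterate Lemma~\ref{lem:g_property}(ii) on each merged block, then sum over blocks) is exactly the paper's proof of item (ii), which only merges $M_{s-1}$ and $M_{s}$; the other item, invariance of $F_{\theta_{0}}[M]$ under relabeling outcomes by a permutation $\sigma\in S_{s}$, is missing from your list, but it is immediate because $F_{\theta_{0}}[M]=\sum_{x}g_{\theta_{0}}[M_{x}]$ is a sum over outcomes. The other items you anticipated (proportional outcomes, convexity, passage to $\A_{e}$ via $\Gamma$ and an extreme-point decomposition of $\A_{1}$) are correct but not part of this lemma; the last one is what the paper carries out inside the proof of Theorem~\ref{thm:mainLUE}.
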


\begin{description}
\item [{(i)}] For any permutation $\sigma\in S_{s}$, let $M^{\sigma}:=\left(M_{\sigma(1)},M_{\sigma(2)},\dots,M_{\sigma(s)}\right)\in\M(\H,s)$.
Then $F_{\theta_{0}}[M]=F_{\theta_{0}}[M^{\sigma}]$.
\item [{(ii)}] Let $N=\left(M_{1},M_{2},\dots,M_{s-2},M_{s-1}+M_{s}\right)\in\M(\H,s-1)$.
Then $F_{\theta_{0}}[M]\geq F_{\theta_{0}}[N].$
\end{description}
\begin{proof}
Part (i) is immediate from the definition (\ref{eq:cFisher}), which
is invariant under reordering. Part (ii) follows immediately from
Lemma \ref{lem:g_property} (ii). 
\end{proof}

This lemma can also be proved directly using the monotonicity of Fisher
metrics. By this lemma, the optimization can be restricted to rank-one
POVMs.

We are now ready to prove the main statement (\ref{eq:main_point}).
Suppose there exists a locally sufficient subspace $\A\subset\B(\H)$.
We establish the following theorem. 
\begin{thm}
\label{thm:mainLUE}For any $M\in\M(\H,s)$, there exists $N\in\M(\A_{+},\dim\A_{h}+d(d+1)/2)$
such that
\begin{equation}
F_{\theta_{0}}[N]=F_{\theta_{0}}[M].\label{eq:equiN}
\end{equation}
 Furthermore, there exists $N'\in\M(\A_{e},\dim\A_{h}+d(d+1)/2-1)$
such that
\begin{equation}
F_{\theta_{0}}[N']\geq F_{\theta_{0}}[M].\label{eq:upperN}
\end{equation}
\end{thm}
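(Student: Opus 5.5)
First push $M$ into $\A$ and then reduce the number of outcomes by a Carathéodory argument on a suitable cone. Set $\tilde M_x:=\Gamma(M_x)$. Since $\Gamma$ is positive, $\R$-linear and unital, $\tilde M_x\in\A_{+}$ and $\sum_x\tilde M_x=\Gamma(I)=I$, so $\tilde M$ is a POVM with elements in $\A_{+}$. All the quantities $\tr\rho_{\theta_0}X$ and $\tr\partial_i\rho_{\theta_0}X$ are real for $X\ge0$. Hence (\ref{eq:lsufficient_space1}) and (\ref{eq:lsufficient_space2}) give $g_{\theta_0}[\tilde M_x]=g_{\theta_0}[M_x]$, and therefore $F_{\theta_0}[\tilde M]=F_{\theta_0}[M]$. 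It remains to reduce the number of outcomes while staying inside $\A_{+}$.

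\textbf{Carathéodory step for (\ref{eq:equiN}).} Drop the outcomes with $\tr\rho_{\theta_0}\tilde M_x=0$: they contribute $0$ to $F_{\theta_0}$ and can be merged into any other outcome by Lemma \ref{lem:g_property}. For each remaining $X\in\A_+$ with $p[X]:=\tr\rho_{\theta_0}X>0$, consider the vector
\[
\Phi(X):=\left(X,\;g_{\theta_0}[X]\right)\in\A_h\times\mathrm{Sym}_d(\R),
\]
a real vector space of dimension $\dim\A_h+d(d+1)/2$. By Lemma \ref{lem:g_property}(i), $\Phi$ is positively homogeneous. Thus $(I,F_{\theta_0}[\tilde M])=\sum_x\Phi(\tilde M_x)$ lies in the convex cone $K$ generated by $\{\Phi(X):X\in\A_+,\,p[X]>0\}$. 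Conic Carathéodory writes it as $\sum_{k=1}^{m}t_k\Phi(X_k)$ with $m\le\dim\A_h+d(d+1)/2$ and $t_k>0$. Setting $N_k:=t_kX_k$ gives $\sum_k N_k=I$ and $\sum_k g[N_k]=F_{\theta_0}[M]$; pad with zeros if needed. This proves the first claim. One could apply the reduction directly to the representation $\sum_x\Phi(\tilde M_x)$ using the affine-dependence form of Carathéodory, which uses only the original elements.

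\textbf{Second claim: lose one outcome and pass to extreme elements.} I would use the standard trick that a point on the relative boundary of a cone needs one fewer generator. The idea is to push $F$ upward: consider $(I,F_{\theta_0}[N]+tP)$ for some $P\geq0$ and increase $t$ until the point leaves $K$. The set $\{\Phi(X)\}$ restricted to $\tr X\le1$ is compact, because $g[X]$ is bounded by the Cauchy–Schwarz-type bound $g[X]\le c\,\tr X$, so $K$ is closed. Its $\mathrm{Sym}_d$-component is also bounded once the $\A_h$-component is fixed at $I$, so the exit value $t^*\ge 0$ is finite. The point $(I,F+t^*P)\ge(I,F)$ then lies on the boundary of $K$. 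It lies in a supporting hyperplane, so in a proper face, whose dimension is at most $\dim\A_h+d(d+1)/2-1$, and Carathéodory inside that face gives one fewer term.

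Next, each generator $X_k$ can be decomposed into extreme rays of $\A_1$, i.e.\ elements of $\A_e$, by Krein–Milman/Carathéodory in the finite-dimensional compact convex set $\A_1$. Lemma \ref{lem:g_property}(ii) shows that the splitting only increases $g$. The problem is that this increases the count, so I would instead do the splitting first. Replace $\tilde M$ by a POVM in $\A_e$ with $F$ larger, which is possible by Lemma \ref{lem:F_property}(ii). Then run the cone argument with $K_e$, the cone generated by $\Phi(X)$ for $X\in\A_e$, the extreme rays. Its generators are of the form $t\Phi(X)$ with $X\in\A_e$, so the Carathéodory representation automatically uses elements of $\A_e$. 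Closedness of $K_e$ may fail, since the extreme points of $\A_1$ need not form a closed set. I would handle this by noting that limits of extreme points decompose into extreme points and invoking Lemma \ref{lem:g_property}(ii) again, or by using the closure and then re-decomposing.

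\textbf{Main obstacle.} The main obstacle is this boundary step. It requires showing that maximal points in the Loewner direction lie on a proper face, that the representing elements remain in $\A_e$, and that no extra outcome reappears. This must be done while controlling zero-probability elements, where $\Phi$ is discontinuous.
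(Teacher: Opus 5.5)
Your first claim is fine and is essentially the paper's argument. Conic Carath\'eodory applied to the finite family $\Phi(\Gamma(M_x))$ is exactly the paper's linear-dependence-and-rescaling step. (Removing zero-probability outcomes first is unnecessary. Merging them preserves $F_{\theta_0}$ exactly only because $\tr\partial_i\rho_{\theta_0}X=0$ whenever $\tr\rho_{\theta_0}X=0$; Lemma~\ref{lem:g_property}(ii) alone gives just an inequality.)

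The second claim, however, is not proved. Your boundary argument needs the exit point $(I,F+t^{*}P)$ to lie in $K$, and you justify this by saying $K$ is closed because $g_{\theta_0}[X]\le c\,\tr X$. Boundedness does not give compactness. The map $X\mapsto g_{\theta_0}[X]$ is discontinuous where $\tr\rho_{\theta_0}X\to0$, and $K$ is genuinely not closed when $\rho_{\theta_0}$ is not faithful. Take the model of Appendix~\ref{sec:nonattainment} with $\A=\B(\H)$, and set $X_\varepsilon:=M^{(\varepsilon)}_{+}/\tr M^{(\varepsilon)}_{+}$. Then $\tr\rho_{\theta_0}X_\varepsilon>0$ and $\Phi(X_\varepsilon)=\bigl(X_\varepsilon,E_{22}/(1+\varepsilon)\bigr)\to\bigl(|2\rangle\langle2|,E_{22}\bigr)$, with $E_{22}$ the matrix unit. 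This limit is not in $K$: any positive decomposition of $|2\rangle\langle2|$ uses only multiples of $|2\rangle\langle2|$, and their $g_{\theta_0}$ vanishes. So $t^{*}$ can be a supremum not realized by any POVM, and the supporting-face argument only yields a representation in $\overline{K}$, by limits that are not POVM elements. The same objection applies to $K_e$. Your proposed repairs do not help: passing to the closure controls the operator component but not the jump in $g_{\theta_0}$, and re-decomposing a limit element into extreme points raises the outcome count again. You flagged this as the main obstacle, and it remains open, so (\ref{eq:upperN}) is not established.

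The missing idea is that you never need to reach the boundary of an infinitely generated cone. After splitting into $M''\in\M(\A_e,s'')$, the paper simply deletes the $(1,1)$ entry. The vectors $\bigl(M''_x,g_{\theta_0}[M''_x]_{-}\bigr)$ then lie in a space of dimension $\dim\A_h+d(d+1)/2-1$. So for larger $s''$ there is a nonzero $\alpha$ with $\sum_x\alpha_xM''_x=0$ and $\sum_x\alpha_xg_{\theta_0}[M''_x]=cE_{11}$. Choose the sign of $\alpha$ so that $c\le0$; taking traces shows $\alpha$ still has a positive entry. The rescaling $M''_x\mapsto(1-\alpha_x/\max_y\alpha_y)M''_x$ then kills one outcome and keeps every element a nonnegative multiple of an element of $\A_e$. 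It changes $F_{\theta_0}$ by $-c\,E_{11}/\max_y\alpha_y\ge0$, and you iterate.

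Equivalently, your face argument goes through if you run it on the cone generated by the finitely many vectors $\Phi(M''_1),\dots,\Phi(M''_{s''})$. That cone is polyhedral, hence closed, and each of its faces is generated by a subset of these vectors. So the resulting elements stay in $\A_e$ with no continuity issue. If that cone is not full-dimensional, Carath\'eodory already gives at most $\dim\A_h+d(d+1)/2-1$ terms.
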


\begin{proof}
We first prove the existence of a POVM $N\in\M(\A_{+},\dim\A_{h}+d(d+1)/2)$
satisfying (\ref{eq:equiN}). Let $\Gamma:\B(\H)\to\A$ be a positive
map given at (\ref{eq:lsufficient_space1}) and (\ref{eq:lsufficient_space2}).
Since $\Gamma(M_{x})\in\A_{+}$ for each $x\in\Omega_{s}$, we have
$M':=\left\{ \Gamma(M_{x})\right\} ^{s}_{x=1}\in\M(\A_{+},s)$. By
(\ref{eq:lsufficient_space1}) and (\ref{eq:lsufficient_space2}),
it follows that $F_{\theta_{0}}[M']=F_{\theta_{0}}[M]$. For each
POVM element $\Gamma(M_{x})$, the pair $\left(\Gamma(M_{x}),g_{\theta_{0}}[\Gamma(M_{x})]\right)$
belongs to the real linear space $\A_{h}\oplus\B^{d}_{hr}$, where
$\B^{d}_{hr}$ is the space of $d\times d$ real symmetric matrices.
Hence $\dim\left(\A_{h}\oplus\B^{d}_{hr}\right)=\dim\A_{h}+d(d+1)/2$.
Therefore, if $s>\dim\A_{h}+d(d+1)/2$, the set $\left\{ \left(\Gamma(M_{x}),g_{\theta_{0}}[\Gamma(M_{x})]\right)\right\} ^{s}_{x=1}$
is linearly dependent. So there exist $\alpha_{x}\in\R$ ( $1\leq x\leq s$),
not all zero, such that
\begin{align}
\sum^{s}_{x=1}\alpha_{x}\Gamma(M_{x}) & =0,\label{eq:depend1}\\
\sum^{s}_{x=1}\alpha_{x}g_{\theta_{0}}\left[\Gamma(M_{x})\right] & =0.\label{eq:depend2}
\end{align}
Define
\[
M^{(t)}_{x}=(1-t\,\alpha_{x})\Gamma(M_{x}),
\]
and choose $x_{*}=\argmax_{x}\alpha_{x}$ with $t=1/\alpha_{x_{*}}$.
From (\ref{eq:depend1}),
\[
\sum^{s}_{x=1}M^{(t)}_{x}=\sum^{s}_{x=1}(1-t\,\alpha_{x})\Gamma(M_{x})=I.
\]
Also, $M^{(t)}_{x_{*}}=0$ and $M^{(t)}_{x}\geq0$ for all $x$. By
Lemma \ref{lem:g_property} (i) and (\ref{eq:depend2}), we have
\begin{align*}
F_{\theta_{0}}[M^{(t)}] & =\sum^{s}_{x=1}g_{\theta_{0}}\left[M^{(t)}_{x}\right]=\sum^{s}_{x=1}g_{\theta_{0}}\left[(1-t\,\alpha_{x})\Gamma(M_{x})\right]\\
 & =\sum^{s}_{x=1}(1-t\,\alpha_{x})g_{\theta_{0}}\left[\Gamma(M_{x})\right]=F_{\theta_{0}}[M].
\end{align*}
Removing the zero element $M^{(t)}_{x_{*}}$, we obtain a POVM belonging
to $\M(\A_{+},s-1)$ with the same Fisher matrix. Repeating this reduction
yields $N\in\M(\A_{+},\dim\A_{h}+d(d+1)/2)$ such that (\ref{eq:equiN}). 

We next prove the existence of a POVM $N'\in\M(\A_{e},\dim\A_{h}+d(d+1)/2-1)$
satisfying (\ref{eq:upperN}). Using Lemma \ref{lem:g_property} (ii)
together with an extreme-point decomposition of each POVM element
of $N$, we obtain a POVM $M''\in\M(\A_{e},s'')$ such that $F_{\theta_{0}}[M'']\geq F_{\theta_{0}}[N]=F_{\theta_{0}}[M]$.
Let $g_{\theta_{0}}[M''_{x}]_{-}$ denote the collection obtained
from $g_{\theta_{0}}[M''_{x}]$ by removing its (1,1)-entry. Then
the pair $\left(M''_{x},g_{\theta_{0}}[M''_{x}]_{-}\right)$ belongs
to a real linear space of dimension $\dim\A_{h}+d(d+1)/2-1=\dim\A_{h}+d(d+1)/2-1$.
Thus, if $s''>\dim\A_{h}+d(d+1)/2-1$, there exist $\alpha_{x}\in\R$
( $1\leq x\leq s''$) and $r\in\R_{\geq0}$ such that
\begin{align}
\sum^{s''}_{x=1}\alpha_{x}M''_{x} & =0,\label{eq:depend=00FF13}\\
\sum^{s''}_{x=1}\alpha_{x}g_{\theta_{0}}\left[M''_{x}\right] & =-rE_{11},\label{eq:depend4}
\end{align}
where $E_{11}$ is the matrix with $1$ in the (1,1)-entry and $0$
elsewhere. Set $M^{(t)}_{x}=(1-t\,\alpha_{x})M''_{x}$, choose $x_{*}=\argmax_{x}\alpha_{x}$,
and put $t=1/\alpha_{x_{*}}$. From (\ref{eq:depend=00FF13}), $\sum^{s''}_{x=1}M^{(t)}_{x}=\sum^{s''}_{x=1}(1-t\,\alpha_{x})M''_{x}=I$,
and $M^{(t)}_{x_{*}}=0$, and $M^{(t)}_{x}\geq0$ for all $x$. By
(\ref{eq:depend4}),
\begin{align*}
F_{\theta_{0}}[M^{(t)}] & =\sum^{s''}_{x=1}(1-t\,\alpha_{x})g_{\theta_{0}}\left[M''_{x}\right]\\
 & =F_{\theta_{0}}[M'']+t\,rE_{11}\geq F_{\theta_{0}}[M''].
\end{align*}
Removing the zero element and repeating the reduction, we obtain $N'\in\M(\A_{e},\dim\A_{h}+d(d+1)/2-1)$
that satisfies (\ref{eq:upperN}). This completes the proof. 
\end{proof}

By Theorem \ref{thm:mainLUE}, equation (\ref{eq:main_point}) follows
immediately.
\begin{cor}
\label{cor:uniqueF}Let $c_{\theta_{0}}=\inf\left\{ \Tr WF_{\theta_{0}}\left[M\right]^{-1}\mid M\in\M\left(\H,s\right),s\in\N\right\} $.
If two POVMs $M$ and $M'$ satisfy $c_{\theta_{0}}=\Tr WF_{\theta_{0}}\left[M\right]^{-1}=\Tr WF_{\theta_{0}}\left[M'\right]^{-1}$,
then $F_{\theta_{0}}\left[M\right]=F_{\theta_{0}}\left[M'\right]$.
Moreover, there exists a POVM\\
$N\in\M\left(\A_{e},\dim\A_{h}+d(d+1)/2-1\right)$ such that $c_{\theta_{0}}=\Tr WF_{\theta_{0}}\left[N\right]^{-1}$. 
\end{cor}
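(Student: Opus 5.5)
The plan has two parts: first use convexity to show that the optimal Fisher matrix is unique, then combine this with Theorem \ref{thm:mainLUE} to get a small optimal POVM in $\A_{e}$.

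\textbf{Uniqueness.} Suppose $M\in\M(\H,s)$ and $M'\in\M(\H,s')$ both attain $c_{\theta_{0}}$. Since $c_{\theta_{0}}$ is finite and attained, $F:=F_{\theta_{0}}[M]$ and $F':=F_{\theta_{0}}[M']$ are positive definite. Form the randomized POVM $K\in\M(\H,s+s')$ with elements $\tfrac12 M_{x}$ and $\tfrac12 M'_{y}$. Lemma \ref{lem:g_property}(i) gives
\[
F_{\theta_{0}}[K]=\tfrac12 F+\tfrac12 F'.
\]
The map $A\mapsto\Tr WA^{-1}$ is strictly convex on positive definite matrices when $W>0$. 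Indeed, for $F\neq F'$ and $H:=F'-F$, the second derivative of $t\mapsto\Tr W(F+tH)^{-1}$ is $2\Tr W A^{-1}HA^{-1}HA^{-1}$ with $A=F+tH$. Writing $B=A^{-1/2}HA^{-1/2}$, this equals $2\Tr \bigl(A^{-1/2}WA^{-1/2}\bigr)B^{2}$, which is strictly positive because $A^{-1/2}WA^{-1/2}>0$ and $B\neq0$. Hence if $F\neq F'$, then $\Tr WF_{\theta_{0}}[K]^{-1}<c_{\theta_{0}}$, contradicting the definition of the infimum. So $F=F'$.

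\textbf{Existence of $N$.} Take an optimal $M$, which exists under the hypothesis that the infimum is attained; if it is not attained, this part is vacuous or is read conditionally. Theorem \ref{thm:mainLUE} supplies $N\in\M(\A_{e},\dim\A_{h}+d(d+1)/2-1)$ with $F_{\theta_{0}}[N]\ge F_{\theta_{0}}[M]>0$. Matrix inversion is operator antitone and $W>0$, so
\[
\Tr WF_{\theta_{0}}[N]^{-1}\le\Tr WF_{\theta_{0}}[M]^{-1}=c_{\theta_{0}}.
\]
Minimality of $c_{\theta_{0}}$ forces equality, so $N$ is optimal. By the uniqueness part we also get $F_{\theta_{0}}[N]=F_{\theta_{0}}[M]$.

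The main obstacle is the strict convexity step. In particular, we need the positivity $\Tr(\text{positive definite})B^{2}>0$ for nonzero Hermitian $B$, and we need both Fisher matrices to be invertible, which follows from the finiteness of $\Tr WF^{-1}$ at an attained optimum.
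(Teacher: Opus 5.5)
Your proposal is correct and follows the paper's proof. Uniqueness uses the uniform mixture $\frac12(M\oplus M')$ together with strict convexity of $A\mapsto\Tr WA^{-1}$, which you verify by a second-derivative computation where the paper cites strict operator convexity of $t\mapsto 1/t$; existence of $N$ comes from Theorem~\ref{thm:mainLUE} plus operator antitonicity of inversion. One small correction: if the infimum is not attained, the existence claim is not vacuous but false (see Appendix~\ref{sec:nonattainment}), so that part must be read under the attainment hypothesis, e.g.\ $\rho_{\theta_0}>0$, exactly as you ultimately do.
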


\begin{proof}
Since the latter statement follows immediately from Theorem \ref{thm:mainLUE},
we prove only the uniqueness claim $F_{\theta_{0}}\left[M\right]=F_{\theta_{0}}\left[M'\right]$.
Assume $\Tr WF_{\theta_{0}}\left[M\right]^{-1}=\Tr WF_{\theta_{0}}\left[M'\right]^{-1}=c_{\theta_{0}}$
but $F_{\theta_{0}}\left[M\right]\not=F_{\theta_{0}}\left[M'\right]$.
We derive a contradiction. Define
\[
\frac{1}{2}\left(M\oplus M'\right)=\frac{1}{2}\left(M_{1},M_{2},\dots,M'_{1},M'_{2},\dots\right).
\]
From Lemma \ref{lem:g_property} (i) , we have
\[
F_{\theta_{0}}\left[\frac{1}{2}\left(M\oplus M'\right)\right]=\frac{1}{2}\left(F_{\theta_{0}}\left[M\right]+F_{\theta_{0}}\left[M'\right]\right).
\]
Since $t\mapsto1/t$ is a strictly operator convex function,
\begin{align*}
\Tr WF_{\theta_{0}}\left[\frac{1}{2}\left(M\oplus M'\right)\right]^{-1} & =\Tr W\left\{ \frac{1}{2}\left(F_{\theta_{0}}\left[M\right]+F_{\theta_{0}}\left[M'\right]\right)\right\} ^{-1}\\
 & <\Tr W\left\{ \frac{1}{2}\left(F_{\theta_{0}}\left[M\right]^{-1}+F_{\theta_{0}}\left[M'\right]^{-1}\right)\right\} .
\end{align*}
Therefore,
\[
\Tr WF_{\theta_{0}}\left[\frac{1}{2}\left(M\oplus M'\right)\right]^{-1}<c_{\theta_{0}}
\]
which contradicts the definition of $c_{\theta_{0}}$.
\end{proof}

\section{Sufficient support size of POVM for Bayes estimation\label{sec:bayes}}

In this section, we study the support size of POVMs for Bayesian estimators
and prove (\ref{eq:main_prior}). In Bayesian estimation, finite-outcome
reductions are naturally connected with convex-analytic properties
of POVMs, as illustrated for general loss functions in \cite{Chiribella}.
The present result refines this viewpoint by incorporating sufficient
subspaces: when a sufficient subspace $\A$ is available, the optimization
can be restricted to $\A_{e}$, and the required support size is bounded
by $\dim\A_{h}$.

Let $\left\{ \rho_{\theta}\in\S(\H)\mid\theta\in\Theta\subset\R^{d}\right\} $
be a parametric family of density operators on a finite-dimensional
Hilbert space $\H$. The mapping $\theta\mapsto\rho_{\theta}$ is
not necessarily smooth, and no unbiasedness condition is imposed on
an estimator $\left(M,\hat{\theta}\right)$. Let $\pi$ be a given
prior on $\Theta$, and let $\ell_{\theta}:\R^{d}\to[0,\infty)$ be
a loss function for each $\theta\in\Theta$. To study the optimal
value of the average cost
\begin{align*}
c_{\pi}\left[M,\hat{\theta}\right] & =\int_{\Theta}d\theta\,\pi(\theta)\,\mathop{\normalcolor \sum^{s}_{x=1}}{\normalcolor \ell_{\theta}(\hat{\theta}(x))\,\tr\rho_{\theta}M_{x}}\\
 & =\sum^{s}_{x=1}h\left[M_{x},\hat{\theta}(x)\right],
\end{align*}
we derive a sufficient support size for POVMs, where
\[
h\left[X,v\right]=\int_{\Theta}d\theta\,\pi(\theta)\,\ell_{\theta}(v)\,\tr\rho_{\theta}X\qquad(X\in\B_{+}(\H),v\in\R^{d}).
\]
It is immediate that, the map
\[
X\to h[X]=\inf_{v\in\R^{d}}h\left[X,v\right]
\]
satisfies the following lemma. 
\begin{lem}
\label{lem:gb_property}$h[tX]=t\,h[X]$ for all $X\in\B_{+}(\H)$,
$t\geq0$. 
\end{lem}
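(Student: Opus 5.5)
The identity should follow directly from the definition of $h$. The plan has three steps: first show that $h[X,v]$ is homogeneous in $X$ for each fixed $v$, then pass to the infimum over $v$, and finally treat $t=0$ separately.

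\emph{Fixed $v$.} For $X\in\B_{+}(\H)$, $v\in\R^{d}$ and $t\ge0$, linearity of the trace gives $\tr\rho_{\theta}(tX)=t\,\tr\rho_{\theta}X$ for every $\theta$. Since $t$ does not depend on $\theta$, it can be pulled out of the integral:
\[
h[tX,v]=\int_{\Theta}d\theta\,\pi(\theta)\,\ell_{\theta}(v)\,t\,\tr\rho_{\theta}X=t\,h[X,v].
\]
The integrand is nonnegative because $\ell_{\theta}\ge0$, $\pi\ge0$ and $\tr\rho_{\theta}X\ge0$. Hence the integral is well defined in $[0,\infty]$, and the scaling is legitimate even when the integral diverges, provided we use the convention $0\cdot\infty=0$.

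\emph{Case $t>0$.} Multiplication by a positive constant commutes with the infimum over $v\in\R^{d}$. This gives
\[
h[tX]=\inf_{v}t\,h[X,v]=t\inf_{v}h[X,v]=t\,h[X].
\]
The identity holds in the extended sense if $h[X]=+\infty$, and also if the infimum is not attained.

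\emph{Case $t=0$.} Here $h[0,v]=0$ for every $v$, so $h[0]=0$. The right-hand side is $0\cdot h[X]$, which equals $0$ when $h[X]$ is finite, and also when $h[X]=\infty$ under the same convention.

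The only point needing care is this $t=0$, possibly infinite case, and the convention $0\cdot\infty=0$ settles it. Otherwise the lemma is immediate, just as Lemma~\ref{lem:g_property}~(i) is.
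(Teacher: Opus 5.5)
Your proof is correct and is the same argument the paper intends when it calls the lemma immediate: for each fixed $v$, linearity of the trace gives $h[tX,v]=t\,h[X,v]$, and a nonnegative scalar then passes through the infimum over $v$. Your separate handling of $t=0$ and of possibly infinite values, via the convention $0\cdot\infty=0$, is extra care that the paper leaves implicit.
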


Since the estimator values $\hat{\theta}(x)$ can be optimized independently
for different outcomes, we have
\[
c_{\pi}[M]:=\inf_{\hat{\theta}}c_{\pi}[M,\hat{\theta}]=\sum^{s}_{x=1}h[M_{x}]
\]
for a POVM $M\in\M(\H,s)$. Then following holds. 
\begin{lem}
\label{lem:dev_b}For a POVM $M=\left(M_{1},M_{2},\dots,M_{s}\right)\in\M(\H,s)$,
the following hold:
\end{lem}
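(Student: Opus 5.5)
The lemma is cut off before its items appear. By analogy with Lemma \ref{lem:F_property}, I expect it to assert two things. First, (i) $c_{\pi}[M]$ is invariant under permutations of the outcomes. Second, (ii) merging two outcomes cannot decrease the Bayes risk: for $N=(M_{1},\dots,M_{s-2},M_{s-1}+M_{s})$ we have $c_{\pi}[M]\leq c_{\pi}[N]$. The second item is the counterpart of Lemma \ref{lem:F_property} (ii), with the inequality reversed because here we minimize a cost rather than maximize information. I will plan the proof for these two items.

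For (i), I would use the representation $c_{\pi}[M]=\sum_{x}h[M_{x}]$ established just before the lemma. A finite sum is invariant under reordering, so (i) is immediate.

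For (ii), the key step is subadditivity of $h$, which plays the role that Lemma \ref{lem:g_property} (ii) played in the local setting. The plan is to show $h[X+Y]\geq h[X]+h[Y]$ for all $X,Y\in\B_{+}(\H)$. For each fixed $v$, the map $X\mapsto h[X,v]$ is additive in $X$, since $\tr\rho_{\theta}(X+Y)=\tr\rho_{\theta}X+\tr\rho_{\theta}Y$ and the integral is linear. Moreover, the integrand is nonnegative because $\ell_{\theta}\geq0$ and $\tr\rho_{\theta}X\geq0$. This gives
\[
h[X+Y,v]=h[X,v]+h[Y,v]\geq h[X]+h[Y].
\]
Taking the infimum over $v$ on the left then yields superadditivity, $h[X+Y]\geq h[X]+h[Y]$. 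The interpretation is that a single estimate $v$ for the merged outcome is no better than separate optimal estimates for the two outcomes. Applying this with $X=M_{s-1}$ and $Y=M_{s}$ gives
\[
c_{\pi}[N]=\sum_{x\leq s-2}h[M_{x}]+h[M_{s-1}+M_{s}]\geq c_{\pi}[M].
\]

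The main obstacle is technical rather than conceptual: the infima may not be attained, and the values could in principle be $+\infty$. The inequality argument above does not need attainment, because it passes directly through the infimum. To avoid the infinite case, I would adopt the convention that all integrals are well-defined in $[0,\infty]$; nonnegativity of the integrands guarantees that additivity still holds there.

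Looking ahead, this lemma feeds a Carathéodory-type reduction like the one in Theorem \ref{thm:mainLUE}. Positive homogeneity (Lemma \ref{lem:gb_property}) preserves the cost under rescalings $(1-t\alpha_{x})M_{x}$ that satisfy $\sum_{x}\alpha_{x}M_{x}=0$ and $\sum_{x}\alpha_{x}h[M_{x}]\geq0$. Superadditivity allows splitting elements into extreme points of $\A_{1}$ without increasing the cost. Together these give the support-size bound $\dim\A_{h}$.
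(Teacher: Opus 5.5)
Your proposal is correct and is essentially the paper's proof. For (ii), the paper takes an $\epsilon$-optimal estimator $\hat{\theta}$ for $N$ and lifts it to $M$ by assigning $\hat{\theta}(s-1)$ to both outcomes $s-1$ and $s$; this is your observation that a single estimate $v$ for the merged outcome gives $h[M_{s-1}+M_{s}]\geq h[M_{s-1}]+h[M_{s}]$, only argued at the estimator level rather than through $c_{\pi}[M]=\sum_{x}h[M_{x}]$. (Minor slip: you first call this subadditivity, but $h[X+Y]\geq h[X]+h[Y]$ is superadditivity, as you say correctly later.)
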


\begin{description}
\item [{(i)}] For any permutation $\sigma\in S_{s}$, let $M^{\sigma}:=\left(M_{\sigma(1)},M_{\sigma(2)},\dots,M_{\sigma(s)}\right)\in\M(\H,s)$.
Then $c_{\pi}[M]=c_{\pi}[M^{\sigma}]$.
\item [{(ii)}] Let $N=\left(M_{1},M_{2},\dots,M_{s-1}+M_{s}\right)\in\M(\H,s-1)$.
Then $c_{\pi}[M]\leq c_{\pi}[N]$.
\end{description}
\begin{proof}
To prove (i), observe that permuting outcome labels does not change
the optimization problem that defines $c_{\pi}[M]$, and hence $c_{\pi}[M]=c_{\pi}[M^{\sigma}]$.
To prove (ii), let $\epsilon>0$. By the definition of $c_{\pi}[N]$,
we can choose $\hat{\theta}$ such that $c_{\pi}\left[N,\hat{\theta}\right]<c_{\pi}\left[N\right]+\epsilon$.
Define $\hat{\theta}':\Omega_{s}\to\R^{d}$ by
\[
\hat{\theta}'(x)=\begin{cases}
\hat{\theta}(x), & 1\leq x\leq s-2,\\
\hat{\theta}(s-1), & x=s-1\text{ or }x=s.
\end{cases}
\]
Then the estimator $\left(M,\hat{\theta}'\right)$ induces exactly
the same average cost as $\left(N,\hat{\theta}\right)$, so $c_{\pi}[M]\leq c_{\pi}\left[M,\hat{\theta}'\right]=c_{\pi}\left[N,\hat{\theta}\right]<c_{\pi}\left[N\right]+\epsilon$.
Since $\epsilon>0$ is arbitrary, we obtain $c_{\pi}[M]\leq c_{\pi}[N]$.
\end{proof}

We are now ready to prove (\ref{eq:main_prior}). Suppose there exists
a sufficient subspace $\A\subset\B(\H)$. We establish the following
theorem.
\begin{thm}
\label{thm:mainB}For any $M\in\M(\H,s)$, there exists $N\in\M(\A_{e},\dim\A_{h})$
such that
\begin{equation}
c_{\pi}[N]\leq c_{\pi}[M].\label{eq:upperNb}
\end{equation}
\end{thm}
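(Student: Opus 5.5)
The plan mirrors the proof of Theorem \ref{thm:mainLUE}, with $h$ in place of $g_{\theta_{0}}$. Here $h$ is only positively homogeneous (Lemma \ref{lem:gb_property}). We therefore run a linear-dependence reduction that keeps the sum of the POVM elements equal to $I$ and does not increase $\sum_{x}h[\cdot]$, and we use Lemma \ref{lem:dev_b} (ii) to split elements into extremal pieces.

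\emph{Step 1 (projection onto $\A$).} Let $\Gamma:\B(\H)\to\A$ be the positive unital map from (\ref{eq:sufficient_space}), and set $M'_{x}:=\Gamma(M_{x})\in\A_{+}$. Then $\sum_{x}M'_{x}=\Gamma(I)=I$. Since $\tr\rho_{\theta}X$ is real for Hermitian $X$, (\ref{eq:sufficient_space}) gives $\tr\rho_{\theta}\Gamma(M_{x})=\tr\rho_{\theta}M_{x}$ for every $\theta$. Hence $h[M'_{x},v]=h[M_{x},v]$ for all $v$, and so $c_{\pi}[M']=c_{\pi}[M]$.

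\emph{Step 2 (extremal decomposition).} The set $\A_{1}$ is a compact convex subset of the finite-dimensional space $\A_{h}$. By Carathéodory/Krein--Milman, each nonzero $M'_{x}$ (rescaled into $\A_{1}$) is a finite nonnegative combination of extreme points of $\A_{1}$. Thus $M'_{x}=\sum_{k}E_{x,k}$ with $E_{x,k}\in\A_{e}$. The zero extreme point can be dropped.

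Refining $M'$ into the POVM $M''$ made of all the $E_{x,k}$ does not increase the cost. Indeed, applying Lemma \ref{lem:dev_b} (ii) repeatedly (together with (i) for reordering) gives $c_{\pi}[M'']\le c_{\pi}[M']$. We now have $M''\in\M(\A_{e},s'')$ for some finite $s''$.

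\emph{Step 3 (support reduction).} If $s''>\dim\A_{h}$, the Hermitian elements $M''_{x}\in\A_{h}$ are linearly dependent: $\sum_{x}\alpha_{x}M''_{x}=0$ with not all $\alpha_{x}=0$. Replacing $\alpha$ by $-\alpha$ if needed, we may assume $\sum_{x}\alpha_{x}h[M''_{x}]\ge0$. This sign choice replaces the $E_{11}$ trick of Theorem \ref{thm:mainLUE}.

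Some $\alpha_{x}$ is then positive, because the $M''_{x}$ are nonzero positive operators and so cannot have a vanishing combination with all coefficients $\le0$ unless all coefficients are zero. Put $t=1/\max_{x}\alpha_{x}$ and $M^{(t)}_{x}=(1-t\alpha_{x})M''_{x}$. Then:
\begin{itemize}
\item $\sum_{x}M^{(t)}_{x}=I$;
\item every $M^{(t)}_{x}$ is a nonnegative multiple of an element of $\A_{e}$, so it lies in $\A_{e}$ or is zero;
\item $M^{(t)}_{x_{*}}=0$ for a maximizing index $x_{*}$;
\item by Lemma \ref{lem:gb_property}, $c_{\pi}[M^{(t)}]=\sum_{x}h[M''_{x}]-t\sum_{x}\alpha_{x}h[M''_{x}]\le c_{\pi}[M'']$.
\end{itemize}
Delete the zero element and iterate until at most $\dim\A_{h}$ outcomes remain. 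If fewer remain, pad with zero elements, or note that $\M(\A_{e},\dim\A_{h})$ should be read as allowing this.

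\emph{Main obstacle.} The delicate point is that $h$ may take the value $+\infty$? No: $h$ is finite because $\ell_{\theta}\ge0$ and the cost is assumed finite. However, $h$ is defined as an infimum, and it need not be linear on non-collinear sums. The argument therefore only uses homogeneity along each fixed element, which is exactly what Step 3 needs. The remaining technical check is that $c_{\pi}[M]$ may be infinite, in which case the claim is trivial. I would also verify that Step 2 yields finitely many extremal pieces. This holds by Carathéodory in $\A_{h}$: at most $\dim\A_{h}+1$ extreme points suffice for each element.
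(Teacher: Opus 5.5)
Your proposal is correct and follows the paper's proof essentially step for step. It maps $M$ through $\Gamma$, refines each element into $\A_e$-pieces using Lemma \ref{lem:dev_b}, and then runs the linear-dependence reduction in $\A_h$, choosing the sign of $\alpha$ so that $\sum_x\alpha_x h[M''_x]\ge 0$ (the paper's $r\ge0$); your extra checks, namely that some $\alpha_x>0$, that the rescaled elements stay in $\A_e$, that the $h[M''_x]$ are finite when $c_{\pi}[M]<\infty$, and the padding with zero elements, are details the paper leaves implicit.
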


\begin{proof}
Let $\Gamma:\B(\H)\to\A$ be a positive map given at (\ref{eq:sufficient_space}).
Since $\Gamma(M_{x})\in\A_{+}$, for each $x\in\Omega_{s}$, we have
$M':=\left\{ \Gamma(M_{x})\right\} ^{s}_{x=1}\in\M(\A_{+},s)$. By
(\ref{eq:sufficient_space}), it follows that $c_{\pi}\left[M'\right]=c_{\pi}\left[M\right]$.
Using Lemma \ref{lem:dev_b} (ii) together with an extreme-point decomposition
of each POVM element of $M'$, we obtain a POVM $M''\in\M(\A_{e},s'')$
such that $c_{\pi}[M'']\leq c_{\pi}[M']=c_{\pi}[M]$. Since $\A_{h}$
is a $\dim\A_{h}$-dimensional real linear space, the set $\left\{ \left(M''_{x}\right)\right\} ^{s''}_{x=1}$
is linearly dependent whenever $s''>\dim\A_{h}$. Thus, when $s''>\dim\A_{h}$,
there exist $\alpha_{x}\in\R$ ( $1\leq x\leq s''$) and $r\in\R_{\geq0}$
such that
\begin{align}
\sum^{s''}_{x=1}\alpha_{x}M''_{x} & =0,\label{eq:depend_b1}\\
\sum^{s''}_{x=1}\alpha_{x}h\left[M''_{x}\right] & =r.\label{eq:depend_b2}
\end{align}
Let $M^{(t)}_{x}=(1-t\,\alpha_{x})M''_{x}$, and define $x_{*}=\argmax_{x}\alpha_{x}$
and $t=1/\alpha_{x_{*}}$. From (\ref{eq:depend_b1}), $\sum^{s''}_{x=1}M^{(t)}_{x}=\sum^{s''}_{x=1}(1-t\,\alpha_{x})M''_{x}=I$
and moreover $M^{(t)}_{x_{*}}=0$ and $M^{(t)}_{x}\geq0$ for all
$x$. From Lemma \ref{lem:gb_property} and (\ref{eq:depend_b2}),
we have
\begin{align*}
c_{\pi}[M^{(t)}] & =\sum^{s''}_{x=1}h\left[(1-t\,\alpha_{x})M''_{x}\right]\\
 & =\sum^{s''}_{x=1}(1-t\,\alpha_{x})h\left[M''_{x}\right]\\
 & =c_{\pi}[M'']-t\,r\leq c_{\pi}[M''].
\end{align*}
Since $M^{(t)}_{x_{*}}=0$, removing this element gives a POVM in
$\M(\A_{e},s''-1)$. Repeating this procedure, we obtain a POVM $N\in\M(\A_{e},\dim\A_{h})$
that satisfies (\ref{eq:upperNb}). 
\end{proof}

By Theorem \ref{thm:mainB}, equation (\ref{eq:main_prior}) follows
immediately.

\section{Example: Real two-level model\label{sec:Example}}

In this section, we illustrate our bounds with a real two-level (qubit)
model and its i.i.d. two-copy extension. 

\subsection*{Single-copy model}

Consider the family on $\H=\C^{2}$:
\[
\left\{ \rho_{\theta}=\frac{1}{2}(I+xX+zZ)\mid\theta=(x,z)\in\Theta\right\} ,\qquad\Theta=\left\{ \theta\in\R^{2}\mid\left\Vert \theta\right\Vert <1\right\} 
\]
Since $\rho_{\theta}$ is a real matrix for each $\theta$, the space
of all real $2\times2$ matrices $\A=\mathbb{M}_{2}(\R)$ is a locally
sufficient subspace in the sense of (\ref{eq:lsufficient_space1})(\ref{eq:lsufficient_space2})
for any fixed $\theta_{0}\in\Theta$. It is also a real $*$-subalgebra
(see Appendix \ref{sec:real_algebra}). Hence $\dim\A_{h}=3$, and
Corollary \ref{cor:uniqueF} implies that, for any $2\times2$ weight
matrix $W$, an optimal measurement attaining $c_{\theta_{0}}=\min_{M}\Tr WF_{\theta_{0}}\left[M\right]^{-1}$
can be chosen from $\M\left(\A_{e},3+\frac{2(2+1)}{2}-1\right)=\M\left(\A_{e},5\right)$.
Since $\rho_{\theta_{0}}>0$ in this model, the minimum exists. In
this qubit two-parameter setting, the known result is sharper: the
value of $c_{\theta_{0}}$ is known analytically, and an optimal measurement
is achievable with support size 4 \cite{gill_massar,yama_tomo}. For
the Bayesian criterion, Theorem \ref{thm:mainB} yields that, for
any prior $\pi$ and loss functions $\ell_{\theta}$, the infimum
of $c_{\pi}[M]$ over all POVMs is equal to the infimum over POVMs
in $\M(\A_{e},3)$. This type of reduction for real-valued models
has also been used in the numerical optimization of POVMs in quantum
parameter estimation \cite{kimizu}. It is closely related to the
sufficient-subspace viewpoint used in the present paper.

\subsection*{Two-copy extension}

Next consider the two-copy model $\left\{ \rho^{\otimes2}_{\theta}\mid\theta=(x,z)\in\Theta\right\} $
on $\mathcal{H}^{\otimes2}$. In this case, an analytic expression
for $c_{\theta_{0}}$ is not known. Let $\{|0\rangle,|1\rangle\}$
be the standard basis of $\mathcal{H}$, and use the basis of $\mathcal{H}^{\otimes2}$:$\ket{00},\ket{11},\frac{1}{\sqrt{2}}\{\ket{01}+\ket{10}\},\frac{1}{\sqrt{2}}\{\ket{01}-\ket{10}\}$.
In this basis, the matrix representation of $\rho^{\otimes2}_{\theta}$
is
\[
\frac{1}{4}\begin{pmatrix}(1+z)^{2} & x^{2} & \sqrt{2}x(1+z) & 0\\
x^{2} & (1-z)^{2} & \sqrt{2}x(1-z) & 0\\
\sqrt{2}x(1+z) & \sqrt{2}x(1-z) & 1+x^{2}-z^{2} & 0\\
0 & 0 & 0 & 1-x^{2}-z^{2}
\end{pmatrix}.
\]
Therefore, $\A=\mathbb{M}_{3}(\R)\oplus\mathbb{M}_{1}(\R)$ is a locally
sufficient subspace, and $\dim\A_{h}=7$. Hence, by Corollary \ref{cor:uniqueF},
for any $2\times2$ weight matrix $W$, an optimizer of $c_{\theta_{0}}=\min_{M}\Tr WF_{\theta_{0}}\left[M\right]^{-1}$
can be chosen from $\M\left(\A_{e},7+\frac{2(2+1)}{2}-1\right)=\M\left(\A_{e},9\right)$.
Since $\rho^{\otimes2}_{\theta_{0}}>0$ in this model, the minimum
exists. Furthermore, by Theorem \ref{thm:mainB}, for any prior $\pi$
and loss functions $\ell_{\theta}$, the infimum of $c_{\pi}[M]$
over all POVMs is equal to the infimum over POVMs in $\M(\A_{e},7)$. 

\section{Conclusion}

We studied finite-outcome reduction for quantum estimation problems
by focusing on the support size of POVMs required to attain optimal
performance.

For local estimation, we considered minimization of the weighted trace
of the MSE under the locally unbiased condition. We showed that if
a locally sufficient subspace $\mathcal{A}\subset\mathcal{B}(\mathcal{H})$
exists, then the optimization can be reduced to POVMs supported on
$\mathcal{A}_{e}$, and the optimization can be restricted to POVMs
with at most $\dim\A_{h}+\frac{1}{2}d(d+1)-1$ outcomes. In particular,
when $\mathcal{A}=\mathcal{B}(\mathcal{H})$, we have $\dim\mathcal{A}_{h}=(\dim\mathcal{H})^{2}$,
so the bound becomes $(\dim\mathcal{H})^{2}+\frac{1}{2}d(d+1)-1$,
which is the general full-space upper bound. We also proved that,
whenever the infimum is attained, the optimal classical Fisher information
matrix is unique under weighted-trace minimization, clarifying the
structure of optimal solutions.

For Bayesian estimation, where neither smoothness of $\theta\mapsto\rho_{\theta}$
nor unbiasedness is required, we established an analogous finite-support
reduction: if a sufficient subspace $\mathcal{A}\subset\mathcal{B}(\mathcal{H})$
exists, then $c_{\pi}=c_{\pi}(\mathcal{A}_{e},\dim\mathcal{A}_{h})$,
so the infimum of the Bayes risk over all POVMs is equal to the infimum
obtained by restricting the POVM to $\mathcal{M}(\mathcal{A}_{e},\dim\mathcal{A}_{h})$.
Thus, both in local and Bayesian settings, optimization over all POVMs
can be replaced by optimization over a finite-dimensional, rank-one/extremal
model with explicit support-size guarantees.

It is useful to compare the present bounds with the support-size scale
that appears in general results on extreme POVMs. In finite dimensions,
extreme POVMs have support size at most $(\dim\mathcal{H})^{2}$,
and this often gives a finite-outcome reduction when the optimum of
the problem under consideration can be attained at an extreme POVM
\cite{Chiribella}. As discussed in the Introduction, however, the
local multiparameter objective considered here is not of this type:
the minimization of the weighted trace of the inverse classical Fisher
information matrix need not be attained at an extreme POVM, and randomization
among measurements may improve the objective value. Thus the usual
extremal-POVM argument alone does not settle the optimal support-size
question for the present local multiparameter problem. 

While our bounds on sufficient POVM support size for quantum estimation
are tighter than previously known ones, they should be understood
as sufficient upper bounds rather than tight characterizations. The
precise minimal support size required for optimal local multiparameter
estimation remains open. In particular, it is not known in general
whether the additional term $\frac{1}{2}d(d+1)-1$ in the full-space
bound $(\dim\mathcal{H})^{2}+\frac{1}{2}d(d+1)-1$ is necessary, nor
whether smaller model-dependent bounds can always be obtained. Similar
questions remain for the sufficient-subspace bound involving $\dim\A_{h}$.
Clarifying the tightness of these bounds, or finding further reductions
under additional structural assumptions, would be an interesting direction
for future work. Even so, establishing a principled support-size guarantee
is crucial: numerical optimization with an arbitrarily fixed number
of outcomes, without theoretical justification, cannot certify optimality.

\section*{ACKNOWLEDGMENTS}

This work was supported by JSPS KAKENHI Grant Numbers JP23H01090 and
JP22K03466 and JST ERATO Grant Number JPMJER2402, Japan.

\appendix

\section{Dimension of real $*$-subalgebra\label{sec:real_algebra}}

Let $\mathcal{A}$ be a subspace of a finite-dimensional matrix algebra
$\mathcal{\B}(\mathcal{H})$. We call $\mathcal{A}$ a real {*}-subalgebra
if the following conditions hold:
\begin{description}
\item [{(i)}] If $A,B\in\A$, then $A+B\in\A$. 
\item [{(ii)}] If $A,B\in\A$, then $AB\in\A$.
\item [{(iii)}] If $A\in\A$ and $r\in\R$, then $rA\in\A$. 
\item [{(iv)}] If $A\in\A$, then $A^{*}\in\A$.
\item [{(v)}] $I\in\A$.
\end{description}
Note that (iii) requires closure only under real scalars ($r\in\mathbb{R}$),
not necessarily under complex scalars. 

It is known that, after a suitable unitary change of basis, $\mathcal{A}$
admits the block decomposition
\[
\A\cong\left\{ \bigoplus_{i}\mathbb{M}_{n_{i}}(\R)\otimes I_{m_{i}}\right\} \oplus\left\{ \bigoplus_{j}\mathbb{M}_{n_{j}}(\C)\otimes I_{m_{j}}\right\} \oplus\left\{ \bigoplus_{k}\mathbb{M}_{n_{k}}(\mathbb{H})\otimes I_{m_{k}}\right\} .
\]
Here, $\mathbb{M}_{n}(\mathbb{R})$ is the algebra of $n\times n$
real matrices, $\mathbb{M}_{n}(\mathbb{C})$ is the algebra of $n\times n$
complex matrices, and $\mathbb{M}_{n}(\mathbb{H})$ is the algebra
of $n\times n$ quaternionic matrices. In the standard complex realization,
each quaternion entry is represented by a $2\times2$ complex matrix
of the form
\[
wI_{2}+i(xX+yY+zZ),\qquad w,x,y,z\in\mathbb{R},
\]
where $X,Y,Z$ are the Pauli matrices. Also, $I_{m}$ denotes the
$m\times m$ identity matrix. Let $\mathcal{A}_{h}:=\{A\in\mathcal{A}\mid A^{*}=A\}$
be the Hermitian part of $\mathcal{A}$. Then
\[
\dim\A_{h}=\sum_{i}\frac{n_{i}(n_{i}+1)}{2}+\sum_{j}n^{2}_{j}+\sum_{k}\{2n^{2}_{k}-n_{k}\}.
\]

Although we have focused on real $*$-subalgebras in this appendix,
a finer analysis may be possible by considering real Jordan algebras.
In the irreducible decomposition of a finite-dimensional real Jordan
algebra, in addition to the Hermitian parts of real, complex, and
quaternionic matrix algebras, spin factors also appear. This may allow
further model-dependent reductions of the dimension entering the support-size
bounds. For details on real Jordan algebras and their role in quantum
sufficiency, see \cite{jordan-yamagata}.

\section{Local identifiability and nonsingular classical Fisher information\label{sec:invertiblePOVM}}

In this appendix, we spell out a simple local regularity condition
under which there exists a POVM whose classical Fisher information
matrix is nonsingular. This also implies that the local optimization
problem (\ref{eq:min}) has a finite value. 

Let
\[
\left\{ \rho_{\theta}\in\mathcal{S}(\mathcal{H})\mid\theta\in\Theta\subset\mathbb{R}^{d}\right\} 
\]
be a $d$-parameter family of density operators, and let $\theta_{0}$
be an interior point of $\Theta$. Suppose that the partial derivatives
\[
D_{i}:=\left.\frac{\partial\rho_{\theta}}{\partial\theta^{i}}\right|_{\theta=\theta_{0}},\qquad i=1,\ldots,d,
\]
exist in $\mathcal{B}_{h}(\mathcal{H})$. 

We say that the model is locally identifiable at $\theta_{0}$ if
$D_{1},\ldots,D_{d}$ are linearly independent over $\mathbb{R}$.
This condition is necessary for the existence of a POVM with nonsingular
classical Fisher information. Indeed, if there exists a nonzero vector
$v=(v_{1},\ldots,v_{d})\in\mathbb{R}^{d}$ such that
\[
\sum^{d}_{i=1}v_{i}D_{i}=0,
\]
then, for every POVM $M$, the corresponding classical Fisher information
matrix is singular.

Conversely, suppose that the model is locally identifiable at $\theta_{0}$.
Choose a finite informationally complete POVM $M=(M_{x})_{x\in\Omega}$,
that is, a finite POVM such that the linear map
\[
\Phi:\mathcal{B}_{h}(\mathcal{H})\to\mathbb{R}^{|\Omega|},\qquad\Phi(A):=\bigl(\tr AM_{x}\bigr)_{x\in\Omega}
\]
 is injective. Such a POVM exists in finite dimension. Put
\[
p_{x}:=\tr\rho_{\theta_{0}}M_{x},\qquad J_{xi}:=\tr D_{i}M_{x}.
\]
If $p_{x}=0$, then $J_{xi}=0$ for all $i$. Indeed, the function
\[
\theta\mapsto p_{\theta}(x):=\tr\rho_{\theta}M_{x}
\]
is nonnegative and attains its minimum value $0$ at the interior
point $\theta_{0}$. Hence all its first derivatives at $\theta_{0}$
vanish. We therefore define the classical Fisher information matrix
by
\[
F_{\theta_{0}}[M]_{ij}=\sum_{x:\,p_{x}>0}\frac{J_{xi}J_{xj}}{p_{x}},
\]
 equivalently adopting the convention that the term with $p_{x}=0$
is zero. For any $v=(v_{1},\ldots,v_{d})\in\mathbb{R}^{d}$, we have
\[
v^{\mathsf{T}}F_{\theta_{0}}[M]v=\sum_{x:\,p_{x}>0}\frac{\left(\tr\left(\sum^{d}_{i=1}v_{i}D_{i}\right)M_{x}\right)^{2}}{p_{x}}.
\]
 If $v\neq0$, then $\sum_{i}v_{i}D_{i}\neq0$ by local identifiability.
Since $\Phi$ is injective, there exists $x\in\Omega$ such that
\[
\tr\left(\sum^{d}_{i=1}v_{i}D_{i}\right)M_{x}\neq0.
\]
For this $x$, necessarily $p_{x}>0$, because all terms with $p_{x}=0$
have zero first derivative. Hence 
\[
v^{\mathsf{T}}F_{\theta_{0}}[M]v>0
\]
 for every nonzero $v$, and $F_{\theta_{0}}[M]$ is positive definite.
Consequently, 
\[
\Tr WF_{\theta_{0}}[M]^{-1}<\infty
\]
for every positive semidefinite weight matrix $W$. In particular,
under local identifiability at $\theta_{0}$, the local optimization
problem (\ref{eq:min}) has a finite value.

\section{Nonattainment of the infimum in local estimation\label{sec:nonattainment}}

In the local setting, when $\rho_{\theta_{0}}>0$, compactness of
$\M(\H,s)$ together with the continuity of $M\mapsto F_{\theta_{0}}[M]$
implies, after the finite support-size reduction, that the infimum
in (\ref{eq:min}) is attained. In contrast, when $\rho_{\theta_{0}}$
is not faithful, the infimum in (\ref{eq:min}) need not be attained.
We demonstrate this by constructing a smooth constant-rank model for
which nonattainment occurs even with a strictly positive definite
weight matrix.

Let $\{|0\rangle,|1\rangle,|2\rangle\}$ be an orthonormal basis of
$\mathbb{C}^{3}$, and put
\[
|+\rangle=\frac{|0\rangle+|1\rangle}{\sqrt{2}},\qquad|-\rangle=\frac{|0\rangle-|1\rangle}{\sqrt{2}}.
\]
Consider
\[
\widetilde{\rho}_{\theta^{1}}=\frac{1+\theta^{1}}{2}|0\rangle\langle0|+\frac{1-\theta^{1}}{2}|1\rangle\langle1|,\qquad|\theta^{1}|<1,
\]
and define
\[
K=\frac{i}{\sqrt{2}}\left(|2\rangle\langle+|-|+\rangle\langle2|\right),\qquad U_{\theta^{2}}=e^{-i\theta^{2}K}.
\]
We consider the smooth two-parameter model
\[
\rho_{\theta^{1},\theta^{2}}=U_{\theta^{2}}\widetilde{\rho}_{\theta^{1}}U^{\dagger}_{\theta^{2}}.
\]
Its rank is equal to $2$ for every $|\theta^{1}|<1$. We take $\theta_{0}=(0,0)$.
Then
\[
\rho:=\rho_{\theta_{0}}=\frac{1}{2}\left(|0\rangle\langle0|+|1\rangle\langle1|\right),
\]
and
\[
D_{1}:=\left.\partial_{1}\rho_{\theta}\right|_{\theta=0}=\frac{1}{2}\left(|0\rangle\langle0|-|1\rangle\langle1|\right),
\]
while
\[
D_{2}:=\left.\partial_{2}\rho_{\theta}\right|_{\theta=0}=\frac{1}{2\sqrt{2}}\left(|+\rangle\langle2|+|2\rangle\langle+|\right).
\]
The corresponding SLDs may be chosen as
\[
L_{1}=|0\rangle\langle0|-|1\rangle\langle1|,
\]
and
\[
L_{2}=\sqrt{2}\left(|+\rangle\langle2|+|2\rangle\langle+|\right).
\]
Hence the SLD Fisher information matrix at $\theta_{0}$ is
\begin{equation}
J_{\theta_{0}}=\begin{pmatrix}1 & 0\\
0 & 1
\end{pmatrix},\label{eq:A1}
\end{equation}
and the SLD Cram\'er--Rao inequality (\ref{eq:SLDCRbound}) implies
\begin{equation}
V_{\theta_{0}}[M,\hat{\theta}]\geq J^{-1}_{\theta_{0}}\label{eq:A_SLDbound}
\end{equation}
for every locally unbiased estimator $(M,\hat{\theta})$.

We now show that there exists a family of locally unbiased estimators
$(M^{(\varepsilon)},\hat{\theta}^{(\varepsilon)})$, indexed by $0<\varepsilon<1$,
such that
\[
\lim_{\varepsilon\to0}V_{\theta_{0}}[M^{(\varepsilon)},\hat{\theta}^{(\varepsilon)}]=J^{-1}_{\theta_{0}},
\]
whereas no locally unbiased estimator attains equality in (\ref{eq:A_SLDbound}).
Consequently, for every strictly positive definite weight matrix $W>0$,
the infimum
\[
c_{\theta_{0}}:=\inf\left\{ \left.\Tr W\,V_{\theta_{0}}\left[M,\hat{\theta}\right]\right|\left(M,\hat{\theta}\right)\text{ is LUE},\,M\in\M(\H,s),\,s\in\N\right\} 
\]
is finite but is not attained by any locally unbiased estimator.

For $0<\varepsilon<1$, define the five-outcome POVM
\[
M^{(\varepsilon)}_{1}=(1-\varepsilon)|0\rangle\langle0|,\qquad M^{(\varepsilon)}_{2}=(1-\varepsilon)|1\rangle\langle1|,\qquad M^{(\varepsilon)}_{3}=\varepsilon|-\rangle\langle-|,
\]
and
\begin{equation}
M^{(\varepsilon)}_{\pm}=\frac{1}{2}\left(\sqrt{\varepsilon}|+\rangle\pm|2\rangle\right)\left(\sqrt{\varepsilon}\langle+|\pm\langle2|\right).\label{eq:A2}
\end{equation}
A direct calculation shows that
\[
M^{(\varepsilon)}_{1}+M^{(\varepsilon)}_{2}+M^{(\varepsilon)}_{3}+M^{(\varepsilon)}_{+}+M^{(\varepsilon)}_{-}=I.
\]
The classical Fisher information matrix corresponding to this POVM
is
\[
F_{\theta_{0}}[M^{(\varepsilon)}]=\begin{pmatrix}1-\varepsilon & 0\\[6pt]
0 & 1
\end{pmatrix}.
\]
Since this matrix is nonsingular, there exists a locally unbiased
estimator $(M^{(\varepsilon)},\hat{\theta}^{(\varepsilon)})$ satisfying
the classical Cram\'er--Rao equality,
\begin{equation}
V_{\theta_{0}}[M^{(\varepsilon)},\hat{\theta}^{(\varepsilon)}]=F_{\theta_{0}}[M^{(\varepsilon)}]^{-1}=\begin{pmatrix}\dfrac{1}{1-\varepsilon} & 0\\[6pt]
0 & 1
\end{pmatrix}.\label{eq:A4}
\end{equation}
Moreover,
\[
\lim_{\varepsilon\to0}F_{\theta_{0}}[M^{(\varepsilon)}]^{-1}=J^{-1}_{\theta_{0}}.
\]

It remains to show that no locally unbiased estimator attains equality
in (\ref{eq:A_SLDbound}). Let $M=\{M_{x}\}$ be an arbitrary finite-outcome
POVM. If equality were attained, then
\[
J^{-1}_{\theta_{0}}=V_{\theta_{0}}[M,\hat{\theta}]\geq F_{\theta_{0}}[M]^{-1}\geq J^{-1}_{\theta_{0}},
\]
where the second inequality follows from $F_{\theta_{0}}[M]\leq J_{\theta_{0}}$.
Hence $F_{\theta_{0}}[M]=J_{\theta_{0}}$. We show that this leads
to a contradiction. Define
\[
a_{x}=\langle0|M_{x}|0\rangle,\qquad b_{x}=\langle1|M_{x}|1\rangle,\qquad c_{x}=\langle2|M_{x}|2\rangle.
\]
We have
\[
p_{x}=\tr\rho M_{x}=\frac{a_{x}+b_{x}}{2},
\]
and
\[
\tr D_{1}M_{x}=\frac{a_{x}-b_{x}}{2}.
\]
Hence the $(1,1)$ component of the classical Fisher information matrix
induced by $M$ is
\[
F_{\theta_{0}}[M]_{11}=\sum_{x:p_{x}>0}\frac{(\tr D_{1}M_{x})^{2}}{p_{x}}=\sum_{x:p_{x}>0}\frac{(a_{x}-b_{x})^{2}}{2(a_{x}+b_{x})}.
\]
Since
\[
\sum_{x}a_{x}=\sum_{x}b_{x}=1,
\]
this can be written as
\begin{equation}
F_{\theta_{0}}[M]_{11}=1-\sum_{x:p_{x}>0}\frac{2a_{x}b_{x}}{a_{x}+b_{x}}.\label{eq:A7}
\end{equation}
Therefore,
\begin{equation}
F_{\theta_{0}}[M]_{11}=1\quad\Longrightarrow\quad a_{x}b_{x}=0\quad\text{for every }x.\label{eq:A8}
\end{equation}

We now show that (\ref{eq:A8}) forces
\begin{equation}
F_{\theta_{0}}[M]_{22}\leq\frac{1}{2}.\label{eq:A9}
\end{equation}
Suppose first that $b_{x}=0$ and $p_{x}>0$. Since $M_{x}\geq0$,
positivity implies
\[
M_{x}|1\rangle=0.
\]
Thus
\[
\langle2|M_{x}|+\rangle=\frac{1}{\sqrt{2}}\langle2|M_{x}|0\rangle,
\]
and consequently
\[
\tr D_{2}M_{x}=\frac{1}{2}\re\langle2|M_{x}|0\rangle.
\]
By positivity,
\[
|\langle2|M_{x}|0\rangle|^{2}\leq a_{x}c_{x}.
\]
Since $p_{x}=a_{x}/2$, we obtain
\begin{equation}
\frac{(\tr D_{2}M_{x})^{2}}{p_{x}}\leq\frac{c_{x}}{2}.\label{eq:A10}
\end{equation}
The same argument, with $|0\rangle$ and $|1\rangle$ interchanged,
applies when $a_{x}=0$ and $p_{x}>0$. Hence (\ref{eq:A8}) and (\ref{eq:A10})
give
\[
F_{\theta_{0}}[M]_{22}\leq\frac{1}{2}\sum_{x}c_{x}=\frac{1}{2}.
\]
This contradicts
\[
F_{\theta_{0}}[M]=J_{\theta_{0}}=I_{2}.
\]
Therefore, equality in (\ref{eq:A_SLDbound}) cannot be attained by
any locally unbiased estimator.


\begin{thebibliography}{10}
\bibitem{kimizu}M. Kimizu, F. Tanaka, and A. Fujiwara, ``Adaptive
quantum state estimation for two optical point sources,'' Phys. Rev.
A \textbf{109}, 032434 (2024).

\bibitem{qest}J. Zhang and J. Suzuki, ``QestOptPOVM: An iterative
algorithm to find optimal measurements for quantum parameter estimation,''
arXiv:2403.20131.

\bibitem{holevo}A. S. Holevo, ``Probabilistic and Statistical Aspects
of Quantum Theory,'' (2nd English edition) (Edizioni della Normale,
Pisa., 2011).

\bibitem{helstrom}C.W. Helstrom, Quantum Detection and Estimation
Theory (Academic Press, New York, 1976).

\bibitem{matsumoto_pure}K. Matsumoto, \textquotedblleft A new approach
to the Cram\'{e}r-Rao-type bound of the pure-state model,\textquotedblright{}
J.Phys. A 35 3111-3123. MR1913859 (2002).

\bibitem{gill_massar}R. D. Gill and S. Massar, ``State estimation
for large ensembles,'' \textit{Physical Review A}, \textbf{61}, 042312
(2000). 

\bibitem{yama_tomo}K. Yamagata, ``Efficiency of quantum state tomography
for qubits,'' Int. J. Quant. Inform., 9, 1167 (2011). 

\bibitem{conic}M. Hayashi and Y. Ouyang, ``Tight Cram\'{e}r-Rao
type bounds for multiparameter quantum metrology through conic programming,''
Quantum 7, 1094 (2023).

\bibitem{adaptive}A. Fujiwara, ``Strong consistency and asymptotic
efficiency for adaptive quantum estimation problems,'' Journal of
Physics A: Mathematical and Theoretical 44(7) 079501-079501 (2011). 

\bibitem{Chiribella}G. Chiribella, G. M. D'Ariano, and D. Schlingemann,
``How continuous quantum measurements in finite dimensions are actually
discrete,'' Phys. Rev. Lett. \textbf{98}, 190403 (2007).

\bibitem{sufficient}D. Petz, ``Sufficient subalgebras and the relative
entropy of states of a von Neumann algebra,'' Communications in Mathematical
Physics 105, 123\textendash 131 (1986). 

\bibitem{jordan-yamagata}K. Yamagata, ``Quantum Sufficiency for
Self-Adjoint Statistical Models via Likelihood-Type Operators on Real
{*}-Subalgebras and Real Jordan Algebras,'' arXiv:2604.23292 (2026).

\end{thebibliography}
\end{document}